\newcommand{\god}{G\"odel}
\newcommand{\Z}{{\mathbb{Z}}}
\newcommand{\R}{{\mathbb{R}}}
\newcommand{\V}{{\mathscr{V}}}
\newcommand{\C}{{\mathscr{C}}}
\newcommand{\I}{{\mathscr{I}}}
\newcommand{\F}{{\mathscr{F}}} 
\newcommand{\T}{{\mathscr{T}}} 
\newcommand{\G}{{\mathscr{G}}} 
\newcommand{\Form}{\textsc{Form}}
\newtheorem{theorem}{Theorem}[section]
\newtheorem{lemma}[theorem]{Lemma}
\newtheorem{proposition}[theorem]{Proposition}
\newtheorem{definition}[theorem]{Definition}
\theoremstyle{definition}
\newtheorem*{remark}{Remark}
\newtheorem*{ack}{Acknowledgement}
\newtheorem{example}{Example}
\newcommand{\GL}{\mathbb{G}}
\begin{document}
%
\title{Valuations in G\"{o}del Logic, and the Euler Characteristic}

\author{Pietro Codara
\and Ottavio M. D'Antona
\and Vincenzo Marra\email{marra@dico.unimi.it}
}

\institute{Dipartimento di Informatica e Comunicazione, Universit\`{a} degli Studi di
Milano, Italy
}

\maketitle

\maketitle

\begin{abstract}
Using the lattice-theoretic version of the Euler characteristic introduced by
V.\ Klee and G.-C.\ Rota in the  Sixties, we define the Euler characteristic of a formula in
\god\ logic (over finitely or infinitely many truth-values). We then prove that
the information encoded by the Euler characteristic is classical, i.e.\
coincides with the analogous notion defined over Boolean logic. Building on
this, we define many-valued versions of the Euler characteristic of
a formula $\varphi$, and prove that they indeed
provide information about the logical status of $\varphi$ in \god\
logic. Specifically, our first main result shows that the many-valued Euler
characteristics are invariants that separate many-valued tautologies from
non-tautologies. Further, we offer an initial investigation of the linear
structure of these generalised characteristics. Our second main result is that
the collection of many-valued characteristics forms a linearly
 independent set in the real vector space of all valuations
of
 \god\ logic over finitely many propositional
variables.
\end{abstract}

\keywords{\god{} logic, \god{} algebra, distributive lattice,
Euler characteristic, valuation, vector space of valuations.}

\section{Introduction and Background}\label{s:intro}
Some decades ago, V. Klee and G.-C.\ Rota introduced a lattice-theoretic
analogue of the Euler characteristic, the celebrated
topological invariant of polyhedra. Let us recall their definition. Let $L$ be a
distributive
lattice. A function $\nu \colon L \to \R$ is a
\emph{valuation} if it satisfies
\begin{equation}\label{eq:valuation}
 \nu(x)+\nu(y)=\nu(x\vee y)+\nu(x\wedge y)
\end{equation}
for all $x,y,z \in L$.  Recall that an element $x \in L$ is
\emph{join-irreducible} if it is not the bottom element of $L$, and $x = y
\vee z$ implies $x=y$ or $x=z$ for all $y,z \in L$. When $L$ is finite, it turns
out \cite[Corollary 2]{rota} that any valuation $\nu$
is uniquely determined by its values on the join-irreducible elements of
$L$, along with its value at the bottom element $\bot$ of $L$.

 \medskip
\begin{definition}[{\cite[p.\ 120]{klee}, \cite[p.\ 36]{rota}}] The Euler
characteristic of a finite
distributive lattice $L$ is the unique  valuation $\chi \colon L \to
\R$ such
that $\chi(x)=1$ for any join-irreducible element $x \in L$, and $\chi(\bot)=0$.
\end{definition}
\medskip

\emph{{\god} \textup{(}infinite-valued propositional\textup{)} logic}
$\GL_\infty$ \cite{dummett}
 can be syntactically defined as the schematic extension of the
intuitionistic  propositional calculus by the \emph{prelinearity axiom} $(\alpha
\to \beta) \vee (\beta \to \alpha)$.
It can also be semantically defined as a many-valued  logic \cite{hajek}, as follows.
Write $\Form$ for the set of   formul\ae\ over propositional variables
$X_1,X_2,\dots$ in the language $\wedge,\vee,\to,\neg,\bot,\top$. (Here, $\bot$
and $\top$
are the logical constants {\it falsum} and {\it verum}, respectively.)
An
\emph{assignment} is a function $\mu \colon \Form \to [0,1] \subseteq \R$
with values in the real unit interval
such that, for any two  $\alpha,\ \beta \in \Form$,

\begin{itemize}
        \item[] $\mu(\alpha \wedge \beta) = \min\{\mu(\alpha),\mu(\beta)\}$
        \item[] $\mu(\alpha \vee \beta) = \max\{\mu(\alpha),\mu(\beta)\}$
        \item[] $\mu(\alpha \rightarrow \beta) = \left\{\begin{array}{l}
        1\ \ \ \ \ \ \,\textrm{if}\ \mu(\alpha)\leq\mu(\beta)\\
        \mu(\beta)\ \ {\rm otherwise}
        \end{array}\right.$
\end{itemize}
and $\mu(\neg \alpha)=\mu(\alpha \to \bot)$, $\mu(\bot)=0$, $\mu(\top)=1$.
A \emph{tautology} is a formula $\alpha$ such that $\mu(\alpha)=1$ for every
assignment $\mu$. As is well known, {\god} logic
is complete with respect to this many-valued semantics. Indeed, for $\alpha \in
\Form$, let
us write $\vdash \alpha$
to mean that $\alpha$ is derivable from the axioms of $\GL_\infty$ using
\textit{modus ponens} as the only
deduction rule. Then the
completeness theorem guarantees that $\vdash \alpha$ holds if and only if
$\alpha$ is a tautology. A stronger result holds: like classical
logic, $\GL_\infty$ also enjoys completeness for theories.
For proofs and background,  see  \cite{hajek}.

For an integer $n\geq 1$, let us write $\Form_n$ for the set of all
formul\ae\ whose propositional variables are contained in
$\{X_1,\ldots,X_n\}$. As
usual, $\varphi, \psi \in \Form_n$ are called
\emph{logically equivalent} if both $\vdash \varphi \to \psi$ and $\vdash \psi
\to \varphi$
hold. Logical equivalence is an equivalence
relation, written $\equiv$, and its equivalence classes are denoted
$[\varphi]_\equiv$. By a routine check, the quotient set $\Form_n/\equiv$ endowed with
operations
$\wedge$, $\vee$, $\top$, $\bot$ induced from the corresponding logical
connectives becomes a distributive lattice with top and bottom element $\top$
and $\bot$, respectively. When $\Form_n/\equiv$ is further
endowed with the operation $\to$ induced by implication, it becomes a Heyting
algebra satisfying prelinearity; such algebras we call \emph{\god\ algebras}
(cf.\ the term \emph{G-algebras} in \cite[4.2.12]{hajek}). The specific \god\
algebra $\G_n = \Form_n/\equiv$ is, by construction, the \emph{Lindenbaum
algebra} of \god\ logic over the language $\{X_1,\ldots,X_n\}$.

\smallskip It is a remarkable fact due to Horn \cite[Theorem 4]{horn} that $\G_n$ is finite
for each integer $n \geq 1$, in analogy with Boolean algebras. A second
important fact is that a finite Heyting algebra is a \god\ algebra if and only
if its collection of join-irreducible elements, ordered by restriction from
$\G_n$,  is a \emph{forest}; i.e.\ the lower
bounds of any such element are a totally ordered set. A more general version of
this result
is also due to Horn \cite[Theorem 2.4]{horn2}.

\smallskip Knowing that $\G_n$ is a finite distributive lattice whose elements
are formul\ae\ in $n$ variables, up to logical equivalence, one is led to give
the following definition.

\medskip
\begin{definition}\label{d:Euler}The \emph{Euler
characteristic} of a formula $\varphi \in \Form_n$, written
\[
 \chi(\varphi) \ ,
\]
 is
the number $\chi([\varphi]_\equiv)$, where $\chi$ is the Euler
characteristic of
the finite distributive lattice $\G_n$.
\end{definition}
\medskip

\noindent However, the question is now whether $\chi(\varphi)$
encodes genuinely \emph{logical} information about $\varphi$, just like the
Euler characteristic of a polyhedron provides geometric information about that
polyhedron. The answer turns out to be affirmative. As
usual, we say that an
assignment $\mu \colon \Form_n
\to [0,1]$ is \emph{Boolean} if it takes values in $\{0,1\}$.

\medskip
\begin{theorem}\label{t:boole}Fix an integer $n \geq 1$.
For any formula $\varphi
\in \Form_n$, the Euler
characteristic $\chi(\varphi)$ equals the number of Boolean assignments $\mu
\colon \Form_n \to \{0,1\}$ such that $\mu(\varphi)=1$.
\end{theorem}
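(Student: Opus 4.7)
My plan is to combine Horn's forest description of $J(\G_n)$ with Birkhoff's representation theorem to turn the Euler characteristic into an explicit enumeration, and then identify that enumeration as the number of Boolean homomorphisms out of $\G_n$.

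First, I would pass to the Birkhoff dual $\G_n \cong \mathcal{O}(\J)$, where $\J = J(\G_n)$ is the poset of join-irreducibles, known to be a forest by Horn's theorem. I would then compute $\chi$ on $\mathcal{O}(\J)$ by induction on $|D|$: for any maximal $m \in D$, the valuation identity gives
\[
\chi(D) \;=\; \chi(D\setminus\{m\}) + \chi(\downarrow m) - \chi(\downarrow m \setminus \{m\}).
\]
Since $\J$ is a forest, $\downarrow m \setminus \{m\}$ is either empty (when $m$ is a root) or equals $\downarrow p$ for the unique immediate predecessor $p$ of $m$; hence $\chi(\downarrow m \setminus \{m\})$ is $0$ or $1$, and $\chi(D) - \chi(D\setminus\{m\})$ equals $1$ exactly when $m$ is a root of $\J$. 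Iterating removal of maximal elements yields
\[
\chi([\varphi]_\equiv) \;=\; \#\bigl\{r \in D_{[\varphi]_\equiv} : r\text{ is a root of }\J\bigr\},
\]
where $D_{[\varphi]_\equiv}$ is the down-set of $\J$ corresponding to $[\varphi]_\equiv$ under Birkhoff.

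Second, I would biject the roots of $\J$ with the Boolean assignments on $\Form_n$. Any Boolean assignment $\mu$ factors uniquely through a Heyting-algebra homomorphism $h_\mu \colon \G_n \to \{0,1\}$; since $\G_n$ is the free G\"odel algebra on $X_1, \ldots, X_n$ and $\{0,1\}$ is itself a G\"odel algebra, there are exactly $2^n$ such $h_\mu$. Each $h$ is determined by the filter $F_h = h^{-1}(1)$, which is prime (as $h$ preserves $\vee$) and negation-complete (as $h(\neg a) = 1$ whenever $a \notin F_h$); in a Heyting algebra this forces $F_h$ to be a maximal proper filter, and conversely every maximal proper filter arises this way. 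In $\mathcal{O}(\J)$ the maximal proper filters are precisely the principal filters generated by atoms, and those atoms are precisely the singleton down-sets $\{r\}$ with $r$ a root of $\J$. This produces a bijection $r \mapsto \mu_r$ between roots of $\J$ and Boolean assignments, characterised by $\mu_r(\varphi) = 1$ iff $r \leq [\varphi]_\equiv$ in $\G_n$.

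Combining the two counts gives
\[
\chi(\varphi) \;=\; \#\{r\text{ root of }\J : r \leq [\varphi]_\equiv\} \;=\; \#\{\mu \text{ Boolean} : \mu(\varphi) = 1\},
\]
as required. The main obstacle I expect is the triple identification in the second step: verifying that Heyting homomorphisms $\G_n \to \{0,1\}$ correspond to maximal proper filters of $\G_n$, that these in turn correspond to atoms of $\mathcal{O}(\J)$, and that such atoms, in a forest $\J$, are precisely the singleton down-sets on the roots. Once this is in place, the inductive valuation computation and the equivalence $r \leq [\varphi]_\equiv \Leftrightarrow \mu_r(\varphi) = 1$ are routine.
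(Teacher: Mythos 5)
Your proof is correct and follows essentially the same route as the paper: the paper obtains this theorem as the $k=1$ case of Theorem \ref{t:main}, whose proof consists of exactly your two steps — an induction with the valuation identity over the forest of join-irreducibles showing $\chi_k(x)$ counts join-irreducibles of height $\le k$ below $x$ (Lemma \ref{l:lemma}, your root-count being the height-$1$ case), and a filter-theoretic bijection between assignments and prime filters/join-irreducibles (Lemma \ref{l:2}, your maximal-filter argument being the $V_2$ case). The only difference is that you argue the special case directly rather than deriving it from the many-valued generalisation.
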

\medskip

\noindent Theorem \ref{t:boole} will turn out to be an easy corollary of our first main result,
Theorem \ref{t:main}.
As an immediate consequence of Theorem \ref{t:boole},
\[
 0 \leq \chi(\varphi) \leq 2^n
\]
for any $\varphi \in \Form_n$.
In particular, note that the following hold.
\begin{itemize}
 \item If $\varphi$ is a tautology in $\GL_\infty$, then
$\chi(\varphi)=2^n$.
\item If $\chi(\varphi)=2^n$, then  $\varphi$ is
 a tautology in classical propositional logic.
\item If $\chi(\varphi)=0$, then $\varphi$ is a contradiction in classical
propositional logic, and conversely.
\end{itemize}
In summary, Theorem \ref{t:boole} shows that, while $\chi(\varphi)$ does encode non-trivial
logical information, that information is classical, and independent of \god\
logic. In fact, if one replicates the above construction over classical logic,
one ends up with a valuation $\chi$ on the Boolean algebra of
$n$-variable formul\ae\ that simply counts the number of atoms below each
element in the Boolean algebra.
By the same token, the Euler characteristic cannot tell apart the tautologies in
G\"{o}del
logic from the remaining formul\ae, whereas it does so for classical
tautologies. In Section \ref{s:main} we show how to remedy this by
considering different valuations on $\G_n$ which we refer to as \emph{generalised
characteristics} (Definition \ref{d:chik}). As it will emerge, they can be thought of as many-valued variants of the
classical  characteristic of Definition \ref{d:Euler}.

Our first main result, Theorem \ref{t:main}, shows
that $\chi_k$ is a natural generalisation of $\chi$ in that it tells apart the
tautologies in \emph{\god\ $(k+1)$-valued logic $\GL_{k+1}$} from the remaining
formul\ae. Here we recall that $\GL_{k+1}$ is the schematic extension of
$\GL_\infty$ via
\begin{equation}\label{eq:godelkval}
\alpha_1\vee(\alpha_1\to\alpha_2)\vee\cdots\vee(\alpha_1\wedge\cdots\wedge\alpha
_{k} \to \alpha_{k+1})\,.
\end{equation}
\noindent Alternatively, using \cite[Proposition 4.18]{axiom2}, one can equivalently
replace\footnote{We thanks the anonymous referee for bringing \cite{axiom2} to our attention.}
(\ref{eq:godelkval}) by the axiom
\begin{equation*}
\bigvee_{1 \leq i \leq k}\left(\alpha_i\to\alpha_{i+1}\right)\,.
\end{equation*}

Semantically,  restrict assignments to
those taking values
in the set
\[
V_{k+1}=\{0=\frac{0}{k}, \frac{1}{k}, \ldots, \frac{k-1}{k}, \frac{k}{k}=1\}
\subseteq [0,1] \ ,
\]
that is, to \emph{$(k+1)$-valued assignments}.
A tautology of  $\GL_{k+1}$
is defined as a formula that takes value $1$ under
any such assignment. Then $\GL_{k+1}$ is complete with
respect to this semantics; see e.g.\ \cite{bcf} for further
background.

\medskip In Section \ref{s:linear}, we analyse the linear strucutre of the generalised characteristics
introduced in Section \ref{s:main}. The set of valuations over a finite distributive lattice $L$ carries a natural structure of
(real) vector space. This is because the function $\zeta\colon L\to \R$ such that $\zeta(x)=0$ for
each $x\in L$ is a valuation,
and if  $\nu_1,\nu_2 \colon L \to \R$ are valuations, then so is the function $r_1\nu_1+r_2\nu_2$
defined by
\[
(r_1\nu_1+r_2\nu_2)(x)=r_1\nu_1(x)+r_2\nu_2(x) \ \ \text{for each } x \in L\ ,
\]
for any
two real numbers $r_1,r_2\in\R$. It is therefore natural to ask what linear relations
are satisfied by the generalised characteristics. As we prove in our second main result
(Theorem \ref{t:basis})
the answer is none.
\section{The Many-valued Characteristic of a Formula}\label{s:main}
The
\emph{height} of a join-irreducible $g \in \G_n$ is the length
$l$ of the longest chain $g=g_1 > g_2 > \cdots
> g_{l}$  in $\G_n$ with each $g_i$ a
join-irreducible element. We write  $h(g)$ for the height of $g$.

We can now define the generalised characteristics that feature in Section \ref{s:intro}.
\medskip
\begin{definition}\label{d:chik}Fix integers $n, k \geq 1$. We write $\chi_k
\colon \G_n \to \R$ for the unique
valuation on $\G_n$  that  satisfies
\[
 \chi_k(g)=\min{\{h(g),k\}}
\]
for each join-irreducible element $g \in \G_n$, and such that, moreover,
$\chi_k(\bot)=0$.
Further, if $\varphi \in \Form_n$, we define $\chi_k(\varphi)=\chi_k([\varphi]_\equiv)$.
\end{definition}

\medskip
\noindent Clearly, $\chi_1$ is the Euler characteristic $\chi$ of $\G_n$.
  We now need to recall a  notion (cfr.\ \cite[Definition 2.1]{ijar}) that is central
to \god\ logic.

\medskip
\begin{definition}
\label{def:n-equivalent}
Fix integers $n, k \geq 1$. We say that two $(k+1)$-valued assignments  $\mu$
and $\nu$ are \emph{equivalent over the first
$n$ variables}, or \emph{$n$-equivalent}, written $\mu \equiv^k_n \nu$,
if and only if there exists a permutation
$\sigma \colon \{1,\ldots,n\} \to \{1,\ldots,n\}$
such that
\begin{equation}
\begin{split}
0 \preceq_0 \mu(X_{\sigma(1)}) \preceq_1 \cdots \preceq_{n-1} \mu(X_{\sigma(n)})
\preceq_{n} 1\ ,\\
0 \preceq_0 \nu(X_{\sigma(1)}) \preceq_1 \cdots \preceq_{n-1}
\nu(X_{\sigma(n)}) \preceq_{n} 1\ ,
\end{split}
\label{eq:assignment}
\end{equation}
where $\preceq_i\ \in \{<,=\}$, for $i=0,\dots,n$.
\end{definition}
\medskip
Thus, two equivalent assignments induce the same strict inequalities ($<$) and
equalities ($=$) on
the propositional variables. Clearly, $\equiv^k_n$ is an equivalence relation.
In various guises, the above notion of equivalent assignments plays a crucial
r\^{o}le in the investigation of G\"{o}del logic; see e.g.\ \cite{ijar, jlc}.
For our purposes here, we observe that  distinct
$2$-valued (=Boolean)
assignments are
not equivalent, so that there are $2^n$ equivalence classes of such
assignments over the first $n$ variables.

\medskip
We next introduce  the
$(k+1)$-valued analogue of $2^n$. As will be proved in Subsection
\ref{s:proof}, the following recursive formula counts the number of join-irreducible elements
of $\G_n$ having height smaller or equal than $k$.
\begin{equation}
\label{eq:taut}
P(n,k)=\sum_{i=1}^k\sum_{j=0}^n\binom{n}{j}T(j,i)\,, \tag{*}
\end{equation}
where
\begin{equation*}
T(n,k)=
\begin{cases}
        1&\text{if \(k=1\)} ,\\[1.5ex]
        0&\text{if \(k>n+1\)} ,\\[1.5ex]
        \displaystyle{\sum_{i=1}^n{\binom{n}{i}T(n-i,k-1)}}&\text{otherwise} .
\end{cases}
\end{equation*}
\begin{table}[ht!]\small
\begin{center}
\resizebox{\columnwidth}{!}{%
\textrm{\begin{tabular}{r | *{7}c}
     & k=1 & 2 & 3 & 4 & 5 & 6 & 7 \\
\midrule
 n=1 & 2 & 3 & 3 & 3 & 3 & 3 & 3 \\
 2   & 4 & 9 & 11 & 11 & 11 & 11 & 11 \\
 3   &8 & 27 & 45 & 51 & 51 & 51 & 51 \\
 4   &16 & 81 & 191 & 275 & 299 & 299 & 299 \\
 5   &32 & 243 & 813 & 1563 & 2043 & 2163 & 2163 \\
 6   &64 & 729 & 3431 & 8891 & 14771 & 18011 & 18731 \\
 7   &128 & 2187 & 14325 & 49731 & 106851 & 158931 & 184131 \\
 8   &256 & 6561 & 59231 & 272675 & 757019 & 1407179 & 1921259 \\
 9   &512 & 19683 & 242973 & 1468203 & 5228043 & 12200883 & 20214483 \\
\end{tabular}}
}
\end{center}
\caption{The number of distinct equivalence classes of
$(k+1)$-valued assignments over $n$ variables.}
\label{T: F2}
\end{table}

Our aim in this section is to establish the following result.
\medskip
\begin{theorem}\label{t:main}Fix integers $n, k \geq 1$, and a formula
$\varphi
\in \Form_n$.
\begin{enumerate}
\item $\chi_k(\varphi)$ equals the  number of
$(k+1)$-valued assignments $\mu
\colon \Form_n \to [0,1]$ such that $\mu(\varphi)=1$, up to $n$-equivalence.
\item $\varphi$ is a tautology in $\GL_{k+1}$ if and only if
$\chi_{k}(\varphi)=P(n,k)$.
\item $\varphi$ is a tautology in $\GL_\infty$ if and only if it is a
tautology in $\GL_{n+2}$ if and only if
$\chi_{n+1}(\varphi)=P(n,n+1)$.
\end{enumerate}
\end{theorem}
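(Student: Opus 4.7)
The plan is to identify $\chi_k(\varphi)$ with the count of join-irreducibles of $\G_n$ of height at most $k$ that lie below $[\varphi]_\equiv$, and then to match these join-irreducibles bijectively with the $n$-equivalence classes of $(k+1)$-valued assignments satisfying $\varphi$. Part (2) will then be the special case $[\varphi]_\equiv = \top$, and part (3) will follow from a bound on the maximum height of a join-irreducible in $\G_n$.

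First I would exploit the forest structure of the poset of join-irreducibles (Horn's theorem, cited above) to compute $\chi_k$ explicitly. For any join-irreducible $g$, the join-irreducibles $\leq g$ form a chain $g_1 < g_2 < \cdots < g_{h(g)} = g$; M\"obius-inverting the prescription $\chi_k(g_i) = \min(i,k)$ along this chain yields an atomic weight $m(g) = 1$ when $h(g) \leq k$ and $m(g) = 0$ otherwise. Since any valuation $\nu$ of a finite distributive lattice with $\nu(\bot) = 0$ can be recovered from its atomic weights as $\nu(x) = \sum_{g \leq x} m(g)$ summed over join-irreducibles, I obtain
\[
\chi_k(x) \ = \ \bigl|\{\,g \leq x : g \text{ is join-irreducible in } \G_n,\ h(g) \leq k\,\}\bigr|.
\]

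Next I would set up the semantic side. For any assignment $\mu$, the set $F_\mu = \{[\varphi]_\equiv : \mu(\varphi) = 1\}$ is a prime filter of $\G_n$, and one checks that $F_\mu = F_\nu$ exactly when $\mu \equiv^k_n \nu$. In a finite distributive lattice every prime filter is principal, generated by a unique join-irreducible, so each $n$-equivalence class $[\mu]$ determines a join-irreducible $g_\mu \in \G_n$ with $g_\mu \leq [\varphi]_\equiv$ iff $\mu(\varphi) = 1$. The crux of the argument, and what I expect to be the main obstacle, is to show that $h(g_\mu)$ equals $1$ plus the number of distinct values $\mu(X_i) \in (0,1)$. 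This identifies the join-irreducibles of height $\leq k$ with the $n$-equivalence classes of $(k+1)$-valued assignments, since the latter are precisely those whose induced preorder uses at most $k-1$ strict intermediate levels. Once this is in place, part (1) is immediate by counting, and part (2) follows because $\varphi$ is a $\GL_{k+1}$-tautology iff $g_\mu \leq [\varphi]_\equiv$ for every $[\mu]$, iff $\chi_k(\varphi) = P(n,k)$, using the identity \eqref{eq:taut} for the total number of join-irreducibles of height $\leq k$ (whose proof is deferred to the dedicated subsection).

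For part (3), an assignment of $n$ variables can produce at most $n$ distinct strict intermediate values, so the maximum height of a join-irreducible of $\G_n$ is $n+1$; consequently $\chi_k = \chi_{n+1}$ for every $k \geq n+1$, and every assignment $\{X_1,\ldots,X_n\} \to [0,1]$ is $n$-equivalent to an $(n+2)$-valued one. This yields the equivalence of $\GL_\infty$-validity and $\GL_{n+2}$-validity, while the remaining equivalence with $\chi_{n+1}(\varphi) = P(n,n+1)$ is the $k = n+1$ instance of part (2).
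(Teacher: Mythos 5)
Your proposal is correct, and its overall architecture coincides with the paper's: first show that $\chi_k(x)$ counts the join-irreducible elements of height at most $k$ below $x$ (the paper's Lemma \ref{l:lemma}), then biject these, for $x=[\varphi]_\equiv$, with the $n$-equivalence classes of $(k+1)$-valued assignments satisfying $\varphi$ via prime filters of $\G_n$ (Lemma \ref{l:2}); parts (2) and (3) then follow exactly as in the paper, including the observation that any $[0,1]$-assignment collapses onto the at most $n+2$ values it actually realises. The one genuinely different ingredient is your derivation of the first step: where the paper runs a case-splitting induction on the forest of join-irreducibles below $x$, you M\"obius-invert the prescription $\chi_k(g_i)=\min(i,k)$ along the chain below each join-irreducible to obtain atomic weights equal to $1$ when $h(g)\leq k$ and $0$ otherwise, and then sum. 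This is shorter and makes transparent why only the height threshold matters, at the price of invoking the representation $\nu(x)=\nu(\bot)+\sum_{g\leq x}m(g)$ of valuations on a finite distributive lattice (the constructive counterpart of the Rota uniqueness result the paper cites), together with join-primeness of join-irreducibles to see that this sum is indeed a valuation. The step you flag as the main obstacle --- that the join-irreducible generating $F_\mu$ has height $1$ plus the number of distinct values $\mu(X_i)\in(0,1)$ --- is exactly the point the paper also dispatches as ``easily checked'' (in the equivalent form $|\G_n/{\uparrow}p|\leq k+1$ if and only if $h(p)\leq k$), so you have located the real technical content correctly; a complete write-up should supply that verification, but its omission is not a gap relative to the paper's own level of detail.
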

\medskip

Since distinct Boolean
assignments are
pairwise inequivalent, Theorem \ref{t:boole} is an immediate consequence of Theorem
\ref{t:main}. We note that
\[
P(n,1)=\sum_{j=0}^n\binom{n}{j}T(j,1)=\sum_{j=0}^n\binom{n}{j}=2^n\,,
\]
so that $P(n,k)$ indeed is the $(k+1)$-valued analogue of $2^n$.
\begin{remark}A closed formula for the number $P(n,k)$ may be obtained combining the
results of
\cite{nelsenetal} on the number of chains in a power set, and the results of \cite[Subsection 4.2]{dm} relating the number of
join-irreducible elements of $\G_n$ to ordered partitions of finite sets.
We  do not provide the combinatorial details in the present paper.
\end{remark}
\subsection{Proof of Theorem \ref{t:main}}\label{s:proof}
\subsubsection{Proof of \textup{(\ref{eq:taut})}}\label{sub:taut}

Let $\F_n$ be the forest of join-irreducible elements of $\G_n$, and let $\T_n$ be the unique
tree of $\F_n$ having maximum height (cfr.\ \cite[Section 2.3]{jlc}).
Here, by the height $h(F)$ of a forest $F$
we mean the cardinality of its longest chain. Denote by $\uparrow g$ the upper set
of an element $g$, that is,
\begin{equation*}
\uparrow g = \left\{x \in F\, |\, x \geq g \right\} .
\end{equation*}
Similarly, the lower set of $g$ is
\begin{equation*}
\downarrow g = \left\{x \in F\, |\, x \leq g \right\} .
\end{equation*}
The height of an element $g \in F$ is the height of $\downarrow g$.
Recall that an \emph{atom} of a partially ordered set with minimum is an element that
covers its minimum.
It can be shown (cfr.\ \cite[Lemma 2.3 -- (a)]{jlc}) that $\T_n$ has precisely $\binom{n}{i}$ atoms $a$ with
$\uparrow a \cong \T_{n-i}$, for each $i=1,\ldots,n$, and no other atom.
Observing that $\T_0$ is the one-element tree, and that $h(\T_n)=h(\T_{n-1})+1$ for each $n$,
we immediately obtain the following recursive formula
for the number of elements of $\T_n$ having height $k$.
\begin{equation*}
T(n,k)=
\begin{cases}
        1&\text{if \(k=1\)}\,,\\[1.5ex]
        0&\text{if \(k>n+1\)}\,,\\[1.5ex]
        \displaystyle{\sum_{i=1}^n{\binom{n}{i}T(n-i,k-1)}}&\text{otherwise}\,.
\end{cases}
\end{equation*}
Further, $\F_n$ contains precisely $\binom{n}{i}$ distinct copies of $\T_i$, for $i=0,\ldots,n$, and no other tree (cfr.\ \cite[Lemma 2.3 -- (b)]{jlc}).
Thus, as claimed, $P(n,k)$ gives the number of elements of $\F_n$ having height smaller or equal than $k$ (i.e.\ the number of join-irreducible elements
of $\G_n$ having height smaller or equal than $k$).

\subsubsection{Two lemmas}\label{sub:lemma}
\begin{lemma}\label{l:lemma}
Fix integers $n,k \geq 1$, let $x \in \G_n$ and consider the valuation
$\chi_k \colon \G_n \to \R$. Then, $\chi_k(x)$ equals the number of
join-irreducible elements $g \in \G_n$ such that $g \leq x$ and $h(g)\leq k$.
\end{lemma}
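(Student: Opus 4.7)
The plan is to define a candidate function $\nu\colon \G_n \to \R$ by
\[
\nu(x) = \bigl|\{\,g \in \G_n : g \text{ join-irreducible},\ g \leq x,\ h(g)\leq k\,\}\bigr|
\]
and then argue that $\nu=\chi_k$ by appealing to the uniqueness result for valuations on finite distributive lattices cited in the introduction (Rota's corollary). This requires checking three things: $\nu(\bot)=0$, $\nu$ agrees with $\chi_k$ on every join-irreducible element, and $\nu$ is itself a valuation.

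The first is immediate: no join-irreducible element lies below $\bot$. The second is where the \god{}-algebraic structure enters. Since $\G_n$ is a \god\ algebra, the poset $\F_n$ of its join-irreducibles is a forest, so for any join-irreducible $g$ the set of join-irreducibles below $g$ is totally ordered. By definition of $h(g)$ this chain has exactly $h(g)$ elements, and the count of those additionally satisfying $h(g')\leq k$ is precisely $\min\{h(g),k\}$, which equals $\chi_k(g)$.

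For the valuation identity $\nu(x)+\nu(y)=\nu(x\vee y)+\nu(x\wedge y)$, I would use the standard distributive-lattice fact that if $g$ is join-irreducible then $g\leq x\vee y$ if and only if $g\leq x$ or $g\leq y$. Writing $A=\{g\text{ j.i.}:g\leq x,\ h(g)\leq k\}$ and $B=\{g\text{ j.i.}:g\leq y,\ h(g)\leq k\}$, this gives that the set counted by $\nu(x\vee y)$ is exactly $A\cup B$ and the set counted by $\nu(x\wedge y)$ is exactly $A\cap B$. The valuation identity then reduces to $|A|+|B|=|A\cup B|+|A\cap B|$, i.e.\ to ordinary inclusion-exclusion for finite sets. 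Uniqueness of the valuation determined by its values on $\bot$ and on the join-irreducibles then forces $\nu=\chi_k$, which is the desired equality.

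I do not expect a real obstacle here; the only point that requires care is the appeal to the forest structure of $\F_n$ in the computation of $\nu$ on a join-irreducible, and the standard fact about $g\leq x\vee y$ in a distributive lattice, both of which are already in play in the paper.
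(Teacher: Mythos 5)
Your proof is correct, but it takes a genuinely different route from the paper's. The paper proceeds by induction on the forest $F$ of join-irreducibles below $x$: it removes a maximal element $l$, writes $x = l \vee x^-$, and computes $\chi_k(x)$ from the valuation identity, with a case split on whether $l$ is an atom and on whether $h(l)\leq k$ or $h(l)>k$. You instead define the counting function $\nu$ outright, check that it is a valuation (via the prime-element property of join-irreducibles in a distributive lattice, $g\leq x\vee y \Rightarrow g\leq x$ or $g\leq y$, plus inclusion--exclusion), check that it agrees with $\chi_k$ at $\bot$ and on join-irreducibles (where the forest structure of $\F_n$ gives that the join-irreducibles below $g$ form a chain of exactly $h(g)$ elements, of which $\min\{h(g),k\}$ have height at most $k$), and conclude by the uniqueness clause of Rota's corollary. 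Your argument is shorter and avoids the case analysis; it also makes transparent that the lemma is forced by uniqueness, since Definition \ref{d:chik} already rests on the same existence-and-uniqueness result, so your appeal to it costs nothing extra. The paper's induction, by contrast, only uses the valuation identity directly and exhibits the computation step by step. Both are complete; each step you flag as needing care (the chain structure below a join-irreducible, and the behaviour of join-irreducibles under $\vee$ in a distributive lattice) does indeed hold and is standard.
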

\begin{proof}
If $x=\bot$ then, by Definition \ref{d:chik}, $\chi_k(x)=0$, and the Lemma trivially holds.

Let $F$ be the forest of all join-irreducible elements $g \in \G_n$ such that $g \leq x$.
(Recall that $x$ is the join of the join-irreducible elements $g \in F$.)
We proceed by induction on the structure of $F$. If $F$ is the one-element forest, then $x$ is a join-irreducible element,
and $F=\{x\}$. By Definition \ref{d:chik}, $\chi_k(x)=1$, for each $k \geq 1$, as desired.

\medskip
Let now $|F|>1$. Let $l\in F$ be a maximal element of $F$, and consider the forest $F^{-}=F \setminus \{l\}$.
Let $x^{-}$ be the join of the elements of $F^-$. We immediately observe that $x = l \vee x^-$.

\medskip
If $l$ is an atom of $\G_n$, then $l \wedge x^- = \bot$. By (\ref{eq:valuation}) and Definition \ref{d:chik},
$\chi_k(x)=\chi_k(l \vee x^-)=\chi_k(l)+\chi_k(x^-)-\chi_k(l \wedge x^-)=1+\chi_k(x^-)$. Using the inductive hypotheses
on $F^-$ we obtain our statement, for the case $h(l)=1$.

\medskip
Let, finally, $h(l)>1$. Consider the element $l^- = l \wedge x^-$. Let $L$ be
the forest of all join-irreducible elements $g \in \G_n$ such that $g \leq l$, and let $L^-$ be
the forest of all join-irreducible elements $g \in \G_n$ such that $g \leq l^-$.
Since $l$ is a join-irreducible, $L$ is a chain. Moreover, one easily sees that
$L^-=L \setminus \{l\}$.
For a forest $P$, we denote by $|P|_k$ the number of elements $p$ of $P$ such that $h(p)\leq k$.
We consider two cases.

\smallskip
$h(l)\leq k$. We observe that $|F^-|_k = |F|_k - 1$ and that $|L^-|_k = |L|_k - 1$.
Using (\ref{eq:valuation}) and the inductive hypotheses we obtain
$\chi_k(x)=\chi_k(l)+\chi_k(x^-)-\chi_k(l \wedge x^-)=|L|_k + |F|_k - 1 - (|L|_k - 1)=|F|_k$.
In other words, $\chi_k(x)$ equals the number of
join-irreducible elements $g \in \G_n$ such that $g \leq x$ and $h(g)\leq k$.

\smallskip
$h(l)> k$. In this case, we observe that $|F^-|_k = |F|_k$ and that $|L^-|_k = |L|_k$.
Using (\ref{eq:valuation}) and the inductive hypotheses we obtain
$\chi_k(x)=\chi_k(l)+\chi_k(x^-)-\chi_k(l \wedge x^-)=|L|_k + |F|_k - |L|_k=|F|_k$.
In other words, $\chi_k(x)$ equals the number of
join-irreducible elements $g \in \G_n$ such that $g \leq x$ and $h(g)\leq k$, and the lemma is proved.
\end{proof}

\medskip
\begin{lemma}\label{l:2}Fix integers $n,k\geq 1$, and let $\varphi \in \Form_n$.
Let $O(\varphi,n,k)$ be the set of
 equivalence classes $[\mu]_{\equiv_n^k}$ of ($k+1$)-valued assignments
such that $\mu(\varphi)=1$. Further, let $J(\varphi,n,k)$ be the set of
join-irreducible elements $x \in \G_n$ such that $x \leq [\varphi]_\equiv$ and $h(x)\leq k$.
Then there is a bijection between $O(\varphi,n,k)$ and $J(\varphi,n,k)$.
\end{lemma}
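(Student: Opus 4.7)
The strategy is to identify both $O(\varphi,n,k)$ and $J(\varphi,n,k)$ with a common collection of prime filters of $\G_n$, and then invoke the Birkhoff correspondence between prime filters and join-irreducibles in a finite distributive lattice. First, for any $(k+1)$-valued assignment $\mu$, the set $P_\mu=\{[\psi]_\equiv\in\G_n : \mu(\psi)=1\}$ is a prime filter of $\G_n$: it is upward closed by monotonicity, closed under meets since $\mu(\psi\wedge\chi)=\min\{\mu(\psi),\mu(\chi)\}$, and prime since $\mu(\psi\vee\chi)=\max\{\mu(\psi),\mu(\chi)\}$. Because $\G_n$ is finite and distributive, $P_\mu=\uparrow g_\mu$ for a unique join-irreducible $g_\mu\in\G_n$, and moreover $\mu(\varphi)=1$ if and only if $g_\mu\leq[\varphi]_\equiv$.

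Next I would show that $\mu_1\equiv^k_n\mu_2$ if and only if $P_{\mu_1}=P_{\mu_2}$, making the assignment $f\colon[\mu]_{\equiv^k_n}\mapsto g_\mu$ well defined and injective as a map into the set of join-irreducibles $g\leq[\varphi]_\equiv$. The forward direction proceeds by induction on formula complexity: each G\"odel connective depends only on comparisons between its inputs, so the total preorder on $\{X_1,\ldots,X_n,\bot,\top\}$ induced by $\mu$ determines whether $\mu(\psi)=1$ for every $\psi\in\Form_n$. The converse reconstructs the preorder from $P_\mu$ by testing the formulas $X_i$ (to detect $\mu(X_i)=1$), $\neg X_i$ (to detect $\mu(X_i)=0$), and $X_i\to X_j$ (to recover the pairwise order).

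It remains to match the image of $f$ with $J(\varphi,n,k)$, which rests on the height identity
\[
h(g_\mu)=\bigl|\{\mu(X_1),\ldots,\mu(X_n),0,1\}\bigr|-1.
\]
Since the right-hand side is at most $k$ whenever $\mu$ takes values in $V_{k+1}$, we obtain $h(g_\mu)\leq k$, so $f$ lands in $J(\varphi,n,k)$. Conversely, every $g\in J(\varphi,n,k)$ prescribes, via its position in the forest $\F_n$, a preorder on $\{X_1,\ldots,X_n,\bot,\top\}$ with exactly $h(g)+1\leq k+1$ equivalence classes; assigning these classes distinct values in $V_{k+1}$ yields a $(k+1)$-valued $\mu$ with $g_\mu=g$, witnessing surjectivity. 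The main obstacle is proving the height identity cleanly: it is implicit in the description of $\F_n$ used in Subsection \ref{sub:taut}, but a tidy verification calls for an induction on the recursive structure of $\T_n$ (matching atoms of $\T_n$ with the least nonempty class of the preorder and recursing into the upper cones $\T_{n-i}$), or equivalently for an explicit encoding of join-irreducibles of $\G_n$ as ordered partitions of subsets of $\{X_1,\ldots,X_n\}$.
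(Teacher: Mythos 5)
Your proposal is correct and follows essentially the same route as the paper: both identify the equivalence classes of $(k+1)$-valued assignments with $\mu(\varphi)=1$ with the prime filters of $\G_n$ containing $[\varphi]_\equiv$ and having height at most $k$ (the paper phrases this via homomorphisms $\G_n\to V_{k+1}$ factoring through $\G_n/\mathfrak{p}$, you via the filter $P_\mu=\mu^{-1}(1)$ directly), and then invoke the Birkhoff correspondence between prime filters and join-irreducible elements. The height identity you single out as the main remaining obstacle is precisely the step the paper dispatches with ``the following is easily checked'', namely that $|\G_n/\uparrow p|\leq k+1$ if and only if $h(p)\leq k$.
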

\begin{proof}In the proof of this lemma we use techniques  from
algebraic logic; for all unexplained notions, please see \cite{hajek}.

 Fix a ($k+1$)-valued assignment $\mu\colon \Form_n \to V_{k+1}$. Endow
$V_{k+1}$ with its unique structure of \god\ algebra compatible with the total
order of the elements of $V_{k+1}\subseteq [0,1]$.  Then there is a
unique homomorphism of \god\ algebras $h_\mu\colon \G_n \to V_{k+1}$
corresponding to $\mu$, namely,
\begin{equation}\label{e:htomu}
 h_\mu([\varphi]_\equiv)= \mu(\varphi) \ .
\end{equation}
Conversely, given any such homomorphism $h \colon \G_n \to V_{k+1}$, there is
a unique $(k+1)$-valued assignment $\mu_h \colon \Form_n \to V_{k+1}$
corresponding to $h$, namely,
\begin{equation}\label{e:mutoh}
\mu_h(\varphi)=h([\varphi]_\equiv) \ .
\end{equation}
Clearly, the correspondences (\ref{e:htomu}--\ref{e:mutoh}) are mutually
inverse, and thus yield a bijection between $(k+1)$-valued assignments to
$\Form$ and homomorphisms $\G_n\to V_{k+1}$. Further, upon noting that
$\mu_h(\varphi)=1$ in (\ref{e:mutoh}) if and only if $h_\mu([\varphi])=1$ in
(\ref{e:htomu}),
we see that this bijection restricts to a bijection
\begin{equation}\label{e:bij1}
O'(\varphi,n,k) \cong \hom{(\varphi, \G_n,V_{k+1})}
\end{equation}
where the right-hand side is the set of homomorphisms $h\colon \G_n\to
V_{k+1}$ such that $h([\varphi]_\equiv)=1$, while the left-hand side is the
collection of $(k+1)$-valued assignments $\mu\colon \Form_n\to V_{k+1}$ with
$\mu(\varphi)=1$.
Now recall that to any homomorphism $h\colon \G_n\to
V_{k+1}$ one associates the prime (lattice) filter of $\G_n$ given by
$\mathfrak{p}_h=h^{-1}(1)$. Conversely, given a prime filter $\mathfrak{p}$ of
$\G_n$
there is a natural onto quotient map $h_{\mathfrak{p}}\colon \G_n
\twoheadrightarrow \G_n/\mathfrak{p}$, where
$C=\G_n/\mathfrak{p}$ is a chain of finite cardinality; further, $|C|\leq k+1$
if and only if $\mathfrak{p}$ has height $\leq k$, meaning that the chain of
prime filters containing it has cardinality $k$. Since any chain with
$|C|\leq k+1$ embeds into $V_{k+1}$, this shows that each prime filter
$\mathfrak{p}$ of $\G_n$ having height $\leq k$ induces by
\begin{equation}\label{e:split}
h_{\mathfrak{p}}^e\colon \G_n \twoheadrightarrow \G_n/\mathfrak{p}
\overset{e}{\hookrightarrow} V_{k+1}
\end{equation}
one homomorphism (not necessarily onto) $h_{\mathfrak{p}}^e$ from $\G_n$ to
$V_{k+1}$ for \emph{each} choice of the embedding $e \colon \G_n/\mathfrak{p}
{\hookrightarrow} V_{k+1}$. It is now easy to check that two $(k+1)$-valued
assignments  $\mu, \nu\colon \Form_n \to V_{k+1}$
satisfy $\mu\equiv_n^k\nu$ if and only if the associated homomorphisms $h_\mu,
h_\nu$ as in (\ref{e:htomu}) factor as in (\ref{e:split}) for the \emph{same}
prime filter $\mathfrak{p}$, although for possibly different embeddings $e$ and
$e'$ into $V_{k+1}$. It is clear that this yields an equivalence relation on
such homomorphisms $h_\mu,h_\nu$. Let us denote by $\hom_\equiv{(\varphi,
\G_n,V_{k+1})}$ the set of equivalence classes of those homomorphisms $h_\mu$
satisfying $h_\mu([\varphi]_\equiv)=1$. Summing up, from the bijection in
(\ref{e:bij1}) we obtain a bijection
\begin{equation}\label{e:bij2}
 O(\varphi,n,k) \cong \hom_\equiv{(\varphi,
\G_n,V_{k+1})} \ .
\end{equation}
To complete the proof, observe that
since $\G_n$ is finite,
every  filter $\mathfrak{p}$ of $\G_n$ is principal, i.e.\ if
there is an element $p \in \G_n$ such that $\mathfrak{p}=\uparrow p$;
moreover, $\mathfrak{p}$ is prime if and only if  $p$ is join-irreducible.
In other words, there is a bijection between join-irreducible elements and
prime filters of
$\G_n$. By definition, the natural quotient map $\G_n
\twoheadrightarrow \G_n/\mathfrak{p}$ sends $[\varphi]_\equiv$ to $1$ if
and only if $[\varphi]_\equiv$ lies in the prime filter $\mathfrak{p}$; that
is, if and only if $[\varphi]_\equiv \geq p$ in $\G_n$. Moreover, the following
is easily checked. Suppose $\mathfrak{p}=\uparrow p$ as in the above, and let
$\G_n/\uparrow p$ be the quotient algebra, which is a chain because
$\mathfrak{p}$ is prime. Then $|\G_n/\uparrow p|\leq k+1$ if and only if the
height of $p$ satisfies $h(p)\leq k$.  Using the preceding observations,
from (\ref{e:split}) and the definition of $\hom_\equiv{(\varphi,
\G_n,V_{k+1})}$ we  obtain a bijection
\begin{equation}\label{e:bij3}
\hom_\equiv{(\varphi,
\G_n,V_{k+1})} \cong J(\varphi,n,k) \ .
\end{equation}
The lemma follows from (\ref{e:bij2}) and (\ref{e:bij3}).
\end{proof}

\subsection{End of Proof of Theorem \ref{t:main}}
\begin{enumerate}
\item By Lemma \ref{l:lemma} the value $\chi_k(\varphi)=\chi_k([\varphi]_\equiv)$ is given by the
number of join-irreducible elements $g \in \G_n$ such that $g \leq [\varphi]_\equiv$
and $h(g)\leq k$. By Lemma \ref{l:2}, such number equals the number of equivalence
classes $[\mu]_{\equiv_n^k}$ of ($k+1$)-valued assignments
such that $\mu(\varphi)=1$, and the statement follows.\smallskip
\item As proved in Subsection \ref{sub:taut}, the formula $P(n,k)$ counts the total number of join-irreducible elements of
$\G_n$ having height smaller or equal than $k$. By Lemma \ref{l:lemma}, $\chi_{k}(\varphi)=P(n,k)$ if and only if
all the join-irreducible elements $g \in \G_n$ such that $h(g)\leq k$ satisfy $g \leq [\varphi]_\equiv$. By Lemma \ref{l:2},
the latter holds if and only if each ($k+1$)-valued assignment $\mu \colon \Form_n \to [0,1]$
satisfies $\mu(\varphi)=1$, i.e.\ $\varphi$ is a tautology in $\GL_{k+1}$, as desired.\smallskip
\item \textit{Claim:}
     If $\varphi \in \Form_n$ is a tautology in $\GL_{n+2}$, then it is a tautology in $\GL_\infty$.

     \smallskip\noindent
     \textit{Proof of Claim:} Suppose, by way of contradiction, that $\varphi$ is not a
     tautology in $\GL_\infty$, but it is a tautology in $\GL_{n+2}$.
     Thus, there must exists an assignment $\mu$ such that $\mu(\varphi)<1$. An easy
     structural induction shows that $\mu(\varphi) \in \{0, \mu(X_1), \ldots, \mu(X_n), 1\}$.
     But then, the restriction of $\mu$ onto its image yields an ($n+2$)-valued
     assignment $\bar\mu$ such that $\bar\mu (\varphi) < 1$, a contradiction.

\smallskip
As one can immediately check, if $\varphi$ is a tautology in $\GL_\infty$, then it is
a tautology in $\GL_{n+2}$. Thus, using the Claim, $\varphi$ is a tautology
in $\GL_{n+2}$ if and only if it is a tautology in $\GL_\infty$.
Finally, by  statement 2) of this theorem , $\varphi$ is a tautology
in $\GL_{n+2}$ if and only if $\chi_{n+1}(\varphi)=P(n,n+1)$, and the last statement of the theorem is proved.
\end{enumerate}

\begin{example}\label{s:end}
Let us consider the \god{} algebra $\G_1$, depicted in Figure \ref{F: chi}.
Lemma \ref{l:lemma} allows us to compute the values
of $\chi_k(x)$ for each $x\in \G_1$, simply by counting the number
of join-irreducible elements under $x$ having height not greater than $k$.
The results are displayed in Figure \ref{F: chi}, for $k = 1$ (i.e.\ for the Euler characteristic),
and for $k=2$. Note that for $k \geq 3$ and for each $x \in \G_1$, $\chi_k(x)$ and $\chi_2(x)$ coincide,
by  statement 3 in Theorem \ref{t:main}.
\begin{figure}[ht!]
        \begin{center}
                \includegraphics{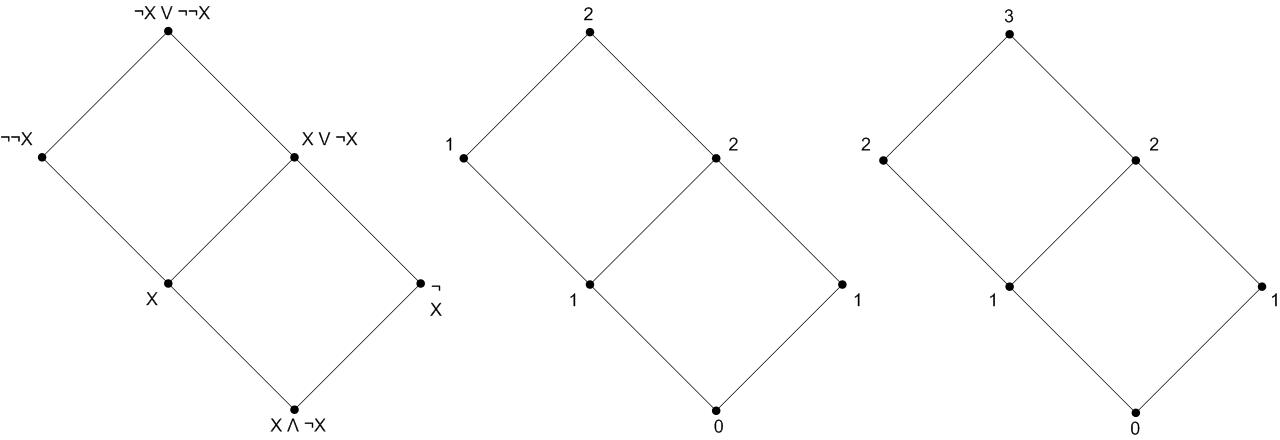}
        \end{center}
        \caption{The \god{} algebra $\G_1$ (left), and the values of $\chi_1$ (middle) and $\chi_2$ (right).}
        \label{F: chi}
\end{figure}

Let us consider the formula $\neg\neg X$. One can check that, up to $n$-equivalence, there
are two distinct $3$-valued assignments $\mu,\nu \colon \Form_1 \to \{0, \frac{1}{2}, 1\}$ such that
$\mu(\neg\neg X)=\nu(\neg\neg X)=1$. Namely, we can take $\mu$ such that $\mu(X)=1$, and $\nu$ such that $\nu(X)=\frac{1}{2}$.
In fact, as one sees in Figure \ref{F: chi}, $\chi_2(\neg\neg X)=2$. The assignment $\mu(X)$ is the only Boolean assignment such that
$\mu(\neg\neg X)=1$. Actually, $\chi_1(\neg\neg X)=\chi(\neg\neg X)=1$.
\end{example}

\section{The linear structure of the characteristics}\label{s:linear}
Let $J_1,\ldots,J_{u_n}$ display all
join-irreducible elements of $\G_n$, for an integer $n\geq 1$. For $i=1,\ldots,u_n$,
let  $e_i$ be the unique valuation
of $\G_n$ such that $e_i(J_i)=1$, and $e_i(J_j)=0$ if $j\neq i$. A moment's reflection shows
that $\{e_1,\ldots,e_{u_n}\}$ is a basis of the vector space of all valuations of $\G_n$.
Hence, $u_n$ is the dimension of this space. Let us remark that it follows from the proof of
(\ref{eq:taut}) in Subsection \ref{s:proof} that
\[
 u_n=P(n,n+1) \ .
\]

An \emph{automorphism} of $\G_n$ is a bijective homomorphism of distributive lattices
$\alpha \colon \G_n \to \G_n$. Such homomorphism is then automatically a homomorphism of Heyting ({\it a fortiori}
G\"{o}del)
algebras, too. A valuation $\nu \colon \G_n\to\G_n$ is \emph{invariant} (\emph{under the
automorphisms of $\G_n$}) if
\[
\nu(x)=\nu(\alpha(x)) \ \ \text{for all } x \in \G_n \ ,
\]
where $\alpha$ is an arbitrary automorphism of $\G_n$.
The invariant valuations of any finite distributive lattice form a vector subspace of the vector
space of all valuations, as one checks easily.

\begin{definition}We denote by $\V_n$ the vector space of all valuations of $\G_n$, for an integer
$n\geq 1$. We further denote by $\I_n$ the vector subspace of $\V_n$ consisting of all invariant valuations
of $\G_n$. Finally, we write $\C_n$ for the vector subspace of $\V_n$ generated by the generalised
characteristics $\{\chi_1,\chi_2,\ldots,\chi_{n+1}\}$.
\end{definition}
By definition, then, $\I_n,\C_n\subseteq \V_n$. More is true.
\begin{proposition}For each integer $n\geq 2$, $\C_n\subset \I_n\subset \V_n$. Further, $\C_1\subset \I_1=\V_1$.
\end{proposition}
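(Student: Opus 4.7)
The plan is to establish the containment chain $\C_n \subseteq \I_n \subseteq \V_n$ (with the second inclusion by definition) and then address strictness or equality as claimed. For $\C_n \subseteq \I_n$, I would appeal to Lemma~\ref{l:lemma}: it gives $\chi_k(x) = |\{g \in \G_n : g \text{ join-irreducible},\ g \leq x,\ h(g) \leq k\}|$. Any automorphism $\alpha$ of $\G_n$ is an order-isomorphism of a finite distributive lattice, so it preserves both join-irreducibility and heights of join-irreducibles; hence the cardinality above is preserved, yielding $\chi_k(\alpha(x)) = \chi_k(x)$. By linearity, $\C_n \subseteq \I_n$.

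For $\C_n \subsetneq \I_n$, an argument working uniformly for every $n \geq 1$ (and thus covering the $n=1$ strict inclusion too) would exploit the unique copy of the one-element tree $\T_0$ in $\F_n$ identified in Subsection~\ref{sub:taut}. Call its single element $J^*$; it is a join-irreducible of height $1$. Because $\T_0$ is the sole tree of its isomorphism type in $\F_n$, and any automorphism of $\G_n$ restricts to an automorphism of $\F_n$ preserving the tree-isomorphism type of each connected component, $J^*$ is fixed by every automorphism. Consequently the basis valuation $e_{J^*}$ (defined by $e_{J^*}(J^*) = 1$ and $e_{J^*}(J) = 0$ on every other join-irreducible $J$) lies in $\I_n$. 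On the other hand, any $\sum_k r_k \chi_k \in \C_n$ takes on a join-irreducible $J$ the value $\sum_k r_k \min\{h(J), k\}$, a quantity depending only on $h(J)$, so it is constant across join-irreducibles of the same height. Since $\F_n$ contains $\binom{n}{1} = n \geq 1$ copies of $\T_1$, whose minima are height-$1$ join-irreducibles distinct from $J^*$, no element of $\C_n$ can equal $e_{J^*}$.

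For the dichotomy, when $n \geq 2$ I would produce a non-invariant basis valuation from the automorphism $\alpha$ of $\G_n$ induced by the transposition $X_1 \leftrightarrow X_2$. This $\alpha$ is non-trivial since $[X_1]_\equiv \neq [X_2]_\equiv$ (separated by any Boolean assignment sending $X_1 \mapsto 1$ and $X_2 \mapsto 0$). Since join-irreducibles join-generate $\G_n$, a non-trivial automorphism must move some join-irreducible $J$, and then $e_J(\alpha(J)) = 0 \neq 1 = e_J(J)$ shows $e_J \in \V_n \setminus \I_n$. When $n = 1$, I would verify instead that $\mathrm{Aut}(\G_1)$ is trivial: by Subsection~\ref{sub:taut}, $\F_1 = \T_0 \sqcup \T_1$ is the disjoint union of two non-isomorphic trees, each rigid as a poset (the second because its two elements have distinct heights). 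So $\mathrm{Aut}(\F_1)$ is trivial, and since any automorphism of $\G_1$ is determined by its restriction to join-irreducibles, $\mathrm{Aut}(\G_1)$ is trivial too; every valuation is therefore automatically invariant, giving $\I_1 = \V_1$.

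The main technical point underlying these arguments is the standard Birkhoff duality fact that automorphisms of $\G_n$ are in natural bijection with order-automorphisms of the forest $\F_n$. Beyond verifying this, I do not foresee any serious obstacle: the orbit-and-height counting for the strict inclusions and the rigidity check for $\F_1$ are then routine.
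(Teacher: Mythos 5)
Your proof is correct and follows essentially the same route as the paper: automorphisms preserve heights of join-irreducibles (hence $\C_n\subseteq\I_n$), any element of $\C_n$ is constant on join-irreducibles of equal height while the basis valuation at an automorphism-fixed height-one join-irreducible is invariant (hence strictness), a variable transposition yields a non-invariant valuation for $n\geq 2$, and $\G_1$ has trivial automorphism group. The only differences are cosmetic: you identify the fixed join-irreducible as the unique $\T_0$-component of $\F_n$ rather than as $[\neg X_1\wedge\cdots\wedge\neg X_n]_\equiv$, and you derive the rigidity of $\G_1$ from Birkhoff duality and the rigidity of $\F_1=\T_0\sqcup\T_1$ instead of by direct inspection.
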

\begin{proof}
Let us exhibit a non-invariant valuation of $\G_n$ for each $n\geq 2$. Consider the formul\ae\
\begin{align*}
\varphi &= X_1\wedge \neg X_2 \wedge \neg X_3 \wedge \cdots \wedge \neg X_n\ , \\
\psi & = \neg X_1\wedge X_2 \wedge \neg X_3 \wedge \cdots \wedge \neg X_n \ .
\end{align*}
It
can be checked that $[\varphi]_\equiv$ and $[\psi]_\equiv$ are join-irreducible elements \textup{(}in fact, atoms\textup{)}
of $\G_n$. Consider the valuation $\nu\colon\G_n\to\G_n$ such that $\nu([\varphi]_\equiv)=1$, while $\nu(x)=0$
for every other join-irreducible $x \in\ \G_n$. The permutation
\begin{align*}
X_1 &\mapsto X_2 \ , \\
X_2 &\mapsto X_1 \ , \\
X_i &\mapsto X_i \  \text{ for each } i=3,\ldots,n
\end{align*}
uniquely extends to
an automorphism $\alpha$ of $\G_n$. By construction, $\alpha([\varphi]_\equiv)=[\psi]_\equiv$. But then
$\nu([\varphi]_\equiv)=1\neq 0=\nu([\psi]_\equiv)=\nu(\alpha([\varphi]_\equiv))$.
This shows that $\I_n \subset \V_n$ when $n\geq 2$. On the other hand, direct inspection of $\G_1$ (cf.\ Example
\ref{s:end}) shows that the automorphism group of $\G_1$ is trivial, i.e.\ it consists of the identity
function only. Hence, $\I_1=\V_1$.

Finally, we prove $\C_n \subset \I_n$ for each $n \geq 1$. Consider the formula
$\gamma= \neg X_1\wedge \cdots \wedge \neg X_n$. It is easily seen that $J=[\gamma]_\equiv$ is
a join-irreducible element of $\G_n$. Moreover, every automorphism of $\G_n$ must fix $J$.
To see this, one checks that  $J$ is the only element of $\G_n$ such that
(i) $J$ has height $1$, and (ii) no join-irreducible element of $\G_n$ is
greater than $J$. Since any automorphism of $\G_n$ must preserve properties (i) and (ii)
of $J$, it follows that every such automorphism fixes $J$.
The valuation $\nu\colon \G_n \to \R$
uniquely determined by
\begin{align*}
 \nu(J)&=1 \ ,\\
\nu(x)&=0 \ \text{ for each other join-irreducible } x \in\ \G_n
\end{align*}
is then invariant under the automorphisms of $\G_n$. However, $\nu$ cannot lie in $\C_n$. Indeed, by the
very definition of $\chi_k$, it follows at once that any element of $\C_n$ assigns the same value
to join-irreducible elements of the same height, because each $\chi_k$ has the latter property. This shows
that $\C_n \neq \I_n $. It remains to show that $\C_n\subseteq \I_n$. This holds because
 each automorphism of $\G_n$ carries a join-irreducible of a given height  to a join-irreducible of the same height.
\end{proof}

Finally, we turn to the announced result on the absence of linear relations among the $\chi_k$'s.
\begin{theorem}\label{t:basis}For each integer $n\geq 1$, the set $\{\chi_1,\ldots,\chi_{n+1}\}$ is a basis of $\C_n$. In particular,
$\dim{\C_n}=n+1$.
\end{theorem}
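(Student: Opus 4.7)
The plan is straightforward: by the very definition of $\C_n$, the set $\{\chi_1,\ldots,\chi_{n+1}\}$ already spans $\C_n$, so it suffices to prove linear independence. Because any valuation on a finite distributive lattice is determined by its values on the join-irreducibles and on $\bot$, and because each $\chi_k$ vanishes at $\bot$, a linear combination $\sum_{k=1}^{n+1} c_k \chi_k$ is the zero valuation precisely when $\sum_{k=1}^{n+1} c_k \min\{h(g),k\} = 0$ for every join-irreducible $g \in \G_n$.

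To turn this into a usable system, I first note that $\G_n$ contains at least one join-irreducible of each height $h \in \{1,\ldots,n+1\}$. This follows from the recursive analysis of $\F_n$ carried out in Subsection \ref{sub:taut}: a quick induction using the recursion for $T(n,k)$, together with $h(\T_n) = h(\T_{n-1})+1$, yields $T(n,h) \geq 1$ for all $1 \leq h \leq n+1$. Picking one join-irreducible of each such height and evaluating the displayed relation then produces the square linear system $M\mathbf{c} = \mathbf{0}$, where $M$ is the $(n+1)\times(n+1)$ matrix with entries $M_{hk} = \min\{h,k\}$.

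What remains is to show that $M$ is invertible. The cleanest route is to factor $M = AA^{\top}$, where $A$ is the lower triangular $0/1$ matrix with $A_{ij}=1$ iff $j \leq i$: a direct check gives $(AA^{\top})_{hk} = \min\{h,k\}$, and since $\det A = 1$, also $\det M = 1$. Alternatively, subtracting row $h-1$ from row $h$ for $h$ descending from $n+1$ down to $2$ reduces $M$ to upper triangular with unit diagonal. Either way $\mathbf{c}=\mathbf{0}$, which gives linear independence and completes the proof. I do not foresee any serious obstacle: the combinatorial content needed (join-irreducibles at each height up to $n+1$) is already available from Subsection \ref{sub:taut}, and the final step reduces to the classical invertibility of the min-matrix, which admits a one-line verification.
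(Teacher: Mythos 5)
Your proposal is correct and follows essentially the same route as the paper: both reduce linear independence to evaluating $\sum_k r_k\chi_k$ at join-irreducibles of heights $1,\dots,n+1$ (the paper takes these along a maximal chain of $\F_n$, you take one per height via $T(n,h)\geq 1$), arriving at the same min-matrix $\bigl(\min\{h,k\}\bigr)$ whose determinant is $1$. Your factorization $M=AA^{\top}$ is a pleasant alternative to the paper's row reduction, but the argument is the same in substance.
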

\begin{proof}Let again $\F_n$ be the forest of join-irreducible elements of $\G_n$. As remarked at the beginning
of this section,
 the height of $\F_n$ -- i.e.\ the cardinality of the longest chain in $\F_n$ -- is $n+1$. Let us display such a chain
\[
 c_1 < c_2 <\cdots < c_{n+1} \ .
\]
Suppose that there are real numbers $r_1,\ldots,r_{n+1}\in\R$ such that
\[\tag{$\star$}\label{t:ld}
 r_1\chi_1+\cdots + r_{n+1}\chi_{n+1}=0 \ \ ,
\]
with the intention of showing $r_1=\cdots =r_{n+1}=0$. By Definition  \ref{d:chik}, the evaluation of  (\ref{t:ld}) at $c_i$,
for each $1\leq i\leq n+1$, yields the system of equations
\[
\left\{\begin{array}{ccc}\tag{S}\label{t:S}
r_1 +  r_2  +  \cdots  +  r_i  +  r_{i+1}  +  \cdots  +  r_{n+1} & = & 0\\
& \vdots & \\
r_1 +  2r_2  +  \cdots  +  ir_i  +  ir_{i+1}  +  \cdots  +  ir_{n+1} & = & 0\\
& \vdots & \\
r_1 +  2r_2  +  \cdots  +  ir_i  +  (i+1)r_{i+1}  +  \cdots  +  (n+1)r_{n+1} & = & 0
\end{array}\right.
\]
The determinant of the system (\ref{t:S}) is
\[
\left| \begin{array}{cccccc}
1 & 1 & 1 & \cdots & 1 & 1\\
1 & 2 & 2 & \cdots & 2 & 2\\
1 & 2 & 3 & \cdots & 3 & 3\\
\vdots  &  \vdots  & \vdots  & \ddots  & \vdots  & \vdots\\
1 & 2 & 3 & \cdots & n & n\\
1 & 2 & 3 & \cdots & n & n+1
\end{array} \right| =
\left| \begin{array}{cccccc}
1 & 1 & 1 & \cdots & 1 & 1\\
0 & 1 & 1 & \cdots & 1 & 1\\
0 & 0 & 1 & \cdots & 1 & 1\\
\vdots  &  \vdots  & \vdots  & \ddots  & \vdots  & \vdots\\
0 & 0 & 0 & \cdots & 1 & 1\\
0 & 0 & 0 & \cdots & 0 & 1
\end{array} \right| = 1 \ .
\]
It follows that the system (\ref{t:S}) has a unique solution, namely, $r_1=\cdots=r_{n+1}=0$.
\end{proof}

\begin{remark} The generalised characteristics are integer-valued: their range is contained in the set of integers
$\Z\subseteq\R$. Linear combinations with integer coefficients of generalised characteristics   are again integer-valued.
Therefore, if we write $\C_n^\Z \subseteq \C_n$ for the set
of such linear combinations of generalised characteristics of $\G_n$, then $\C_n^\Z$ has the structure of a $\Z$-module.
In the proof above
of Theorem \ref{t:basis}, the fact that the determinant of the   system \textup{(\ref{t:S})} has value $1$ -- i.e.\ that the
matrix of coefficients of \textup{(\ref{t:S})}
 is \emph{unimodular} -- can be used to prove that $\C_n^\Z$ contains \emph{all} integer-valued valuations of $\C_n$.
\end{remark}

\begin{ack}
We are grateful to the anonymous referee  for a careful
reading of our paper, and for his/her suggestion that the results presented
here may have extensions to other prominent $t$-norm based logics related
to \god{} logic, such as the logic of nilpotent minimum \cite{nm}.
\end{ack}
%
\bibliographystyle{ijuc}
\bibliography{cdm_mvlsc}

\begin{thebibliography}{10}

\bibitem{nm}
Stefano Aguzzoli, Manuela Busaniche, and Vincenzo Marra.
\newblock (2007).
\newblock Spectral duality for finitely generated nilpotent minimum algebras,
  with applications.
\newblock {\em J. Logic Comput.}, 17(4):749--765.

\bibitem{jlc}
Stefano Aguzzoli, Ottavio~M. D'Antona, and Vincenzo Marra.
\newblock (2009).
\newblock {Computing Minimal Axiomatizations in G\"{o}del Propositional Logic}.
\newblock {\em J.\ Logic Comput.}
\newblock To appear ({\tt doi:10.1093/logcom/exp054}).

\bibitem{bcf}
Matthias Baaz, Agata Ciabattoni, and Christian~G. Ferm{\"u}ller.
\newblock (2003).
\newblock Hypersequent calculi for {G}\"odel logics---a survey.
\newblock {\em J. Logic Comput.}, 13(6):835--861.

\bibitem{axiom2}
Petr Cintula, Francesc Esteva, Joan Gispert, Llu{\'{\i}}s Godo, Franco
  Montagna, and Carles Noguera.
\newblock (2009).
\newblock Distinguished algebraic semantics for {$t$}-norm based fuzzy logics:
  methods and algebraic equivalencies.
\newblock {\em Ann. Pure Appl. Logic}, 160(1):53--81.

\bibitem{ijar}
Pietro Codara, Ottavio~M. D'Antona, and Vincenzo Marra.
\newblock (2009).
\newblock An analysis of {R}uspini partitions in {G}\"{o}del logic.
\newblock {\em Internat. J. Approx. Reason.}, 50(6):825--836.

\bibitem{dm}
Ottavio~M. D'Antona and Vincenzo Marra.
\newblock (2006).
\newblock Computing coproducts of finitely presented {G}\"odel algebras.
\newblock {\em Ann. Pure Appl. Logic}, 142(1-3):202--211.

\bibitem{dummett}
Michael Dummett.
\newblock (1959).
\newblock A propositional calculus with denumerable matrix.
\newblock {\em J. Symb. Logic}, 24:97--106.

\bibitem{hajek}
Petr H{\'a}jek.
\newblock (1998).
\newblock {\em Metamathematics of fuzzy logic}, volume~4 of {\em Trends in
  Logic---Studia Logica Library}.
\newblock Kluwer Academic Publishers, Dordrecht.

\bibitem{horn}
Alfred Horn.
\newblock (1969).
\newblock Free {$L$}-algebras.
\newblock {\em J. Symbolic Logic}, 34:475--480.

\bibitem{horn2}
Alfred Horn.
\newblock (1969).
\newblock Logic with truth values in a linearly ordered {H}eyting algebra.
\newblock {\em J. Symbolic Logic}, 34:395--408.

\bibitem{klee}
Victor Klee.
\newblock (1963).
\newblock The {E}uler characteristic in combinatorial geometry.
\newblock {\em Amer. Math. Monthly}, 70:119--127.

\bibitem{nelsenetal}
Roger~B. Nelsen and Harvey Schmidt, Jr.
\newblock (1991).
\newblock Chains in power sets.
\newblock {\em Math. Mag.}, 64(1):23--31.

\bibitem{rota}
Gian-Carlo Rota.
\newblock (1971).
\newblock On the combinatorics of the {E}uler characteristic.
\newblock In {\em Studies in {P}ure {M}athematics ({P}resented to {R}ichard
  {R}ado)}, pages 221--233. Academic Press, London.

\end{thebibliography}

\end{document}